\newcommand{\F}{\mathbb{F}}
\newcommand{\X}{\mathbf{X}}
\def\Fq{{\mathbb F}_q}
\def\d{{\delta}}
\def\X{{\textbf{X}}}
\def\t{{\times}}
\def\AA{{\mathbb A}}
\def\PP{{\mathbb P}}
\def\clm{{C^\AA(\ell,m)}}
\def\clmD{{C^\AA_D(\ell,m)}}
\def\l{{\ell}}
\DeclareMathOperator{\Supp}{Supp}
\DeclareMathOperator{\ev}{Ev}
\DeclareMathOperator{\aut}{Aut}
\DeclareMathOperator{\GL}{GL}
\theoremstyle{definition}
\numberwithin{theorem}{section}
\title{Majority Logic Decoding of Affine Grassmann Codes Over Nonbinary Fields}
\author{ Fernando Pi\~nero Gonz\'alez, Prasant Singh, and Rohit Yadav}
\date{}
\begin{document}      
\maketitle
\begin{abstract}
In this article, we consider the decoding problem of affine Grassmann codes over nonbinary fields. We use matrices of different ranks to construct a large set consisting of parity checks of affine Grassmann codes, which are orthogonal with respect to a fixed coordinate. By leveraging the automorphism groups of these codes, we generate a set of orthogonal parity checks for each coordinate. Using these parity checks, we perform majority logic decoding to correct a large number of errors in affine Grassmann codes. The order of error correction capability and the complexity of this decoder for affine Grassmann codes are the same as those of the majority logic decoder for Grassmann codes proposed in \cite{BPP2021}. 
\end{abstract}

\section{Introduction}

Let $q$ be the power of a prime, and let $\Fq $ be the finite field with $q$ elements. Using the language of projective systems \cite{TVN2007}, one can associate a linear code corresponding to subsets of $\Fq$-rational points of algebraic varieties. Affine Grassmann codes are examples of such a class of codes. These codes are obtained by taking the projective system of the set of \( \Fq \)-rational points of an affine open subset of the Grassmannian. They can be considered a generalization of the first-order Reed-Muller codes. For example, let $\l$ and $\l^\prime$ be positive integers such that $\ell \leq \ell^\prime $. Set $ m = \ell + \ell^\prime $ and $ \delta = \ell \ell^\prime $. Let $ \AA^\delta(\Fq) $ denote the set of all $ \Fq $-rational points of an affine open cell of the Grassmannian $ G_{\ell, m} $, which is defined by the non-vanishing of some fixed coordinate. The code corresponding to this projective system is known as the affine Grassmann code corresponding to the set $ \AA^\delta(\Fq) $ and is denoted by $ \clm $. The study of affine Grassmann codes was initiated by Beelen-Ghorpade-H{\o}holdt \cite{BGH2010}, and in this article, the basic parameters of these codes were determined. To be precise, it was proved that the affine Grassmann code $ \clm $ is an $ [n, k, d]_q $ code where

 \begin{equation}
     \label{eq: parameter}
     n=q^\d, \; k= \binom{m}{\l}, \text{ and }d=q^{\d-\ell^2}\prod_{i=0}^{\l-1}(q^\l-q^i).
 \end{equation}

In addition, in this article, the number of codewords of minimum weight was also determined, and it was shown that the automorphism group of $\clm$ is quite large. In a subsequent work \cite{BGT2012}, the authors studied the dual Grassmann code $\clm^\perp$ in detail and determined that this code is the evaluation of certain functions on the set of all matrices of size $\ell\times \ell^\prime$. Further, it was shown that the minimum distance of the dual affine Grassmann code $\clm^\perp$ is given by 
\begin{equation}
    \label{eq: mindistdual}
    d=\begin{cases}
        3, \text{ if }q\geq 3\\
        4 \text{ otherwise}.
    \end{cases}
\end{equation}

Since their introduction, several properties of affine Grassmann codes have been studied. For example, Ghorpade and Kaipa \cite{GK2013} explicitly determined the automorphism groups of $\clm$. Some initial and terminal generalized Hamming weights of $\clm$ were determined by Datta and Ghorpade \cite{DG2015}. The weight spectrum of $\clm$ in the case $\ell=2$ was determined in \cite{PS2019}. In other words, over the last 15 years, different groups of mathematicians have studied several interesting properties of affine Grassmann codes. However, the decoding problem for these codes has not been explored.

As we mentioned earlier, affine Grassmann codes are codes obtained from the set of $\Fq$-rational points of the Grassmannian. One can easily see that these two classes of codes have the same order of minimum distances and dimensions. In addition, several other properties of these two classes of codes are very similar. For example, the weight spectrum of the Grassmann code $C(2, m)$, and the affine Grassmann code $C^{\mathbb{A}}(2, m)$ can be determined by the rank of the matrix corresponding to codewords of these codes \cite{Nogin1996, PS2019}. Similarly, the automorphism groups of these two classes of codes are also of the same order \cite{GK2013}. Recently, using the point-line incidence geometry of the Grassmannian, a large number of orthogonal parity checks for the Grassmann code $C(\ell, m)$ were constructed. As a consequence, the majority voting decoder for Grassmann codes was proposed \cite{BPP2021}. The proposed decoder has a complexity of order 2 in the length of the Grassmann code and can asymptotically correct up to $\lfloor d/2^{\ell+1}\rfloor$ errors for the code $C(\ell, m)$. Further, a majority voting decoder was proposed for certain Schubert codes \cite{S2022}, which are also codes obtained from the projective system of $\Fq$-rational points of Schubert subvarieties of the Grassmannian. Therefore, it is natural to explore the decoding problem for affine Grassmann codes through the prism of majority voting.

This article focuses only on affine Grassmann codes $\clm$ over nonbinary fields. We aim to construct a set of parity checks for the affine Grassmann code $\clm$ that can be used to perform majority voting decoding. Since the automorphism group $\aut(\clm)$  acts transitively on the coordinates of codewords of $\clm$, it is sufficient to construct parity checks that are orthogonal to some fixed coordinate. Our idea is to first construct parity checks whose support consists of matrices of the same rank and are orthogonal. We will show that if we choose the right set of matrices of a fixed rank, then puncturing the affine Grassmann code outside these matrices results in a parity check of $\clm$ with support lying in the chosen set of matrices. These parity checks will serve as our candidates for orthogonal parity checks.

\section{Preliminaries}
In this section we recall some known results for affine Grassmann codes and the idea of majority decoding for linear codes.
The section is divided into two parts. In the first part, we recall the construction of affine Grassmann codes and also some results that are going to be useful for us. In the second part, we discuss the notion of orthogonal parity checks for a code. The section ends with one important results about using orthogonal parity checks to correct certain errors for the underlying code.
\subsection{ Affine Grassmann Codes}

As we have fixed, let $\F_q$ be the finite field with $q$ elements, where  $q$ is a power of some fixed prime. Fix positive integers $\l,\l^\prime$  with $\l \leq \l^\prime$; we set $m=\l + \l^\prime$ and $\delta = \l \l^\prime.$  Let $\AA^\d(\Fq)$ be the set of all matrices of size $\ell\times \ell^\prime$ over $\Fq$. This set can be identified as the set of $\Fq$-rational points of an open subset of the Grassmannian $G_{\ell, m}$ of all $\ell$-planes of an $m$-dimensional vector space over $\Fq$ (see \cite{BGH2010}*{Section VII} for further details).

Affine Grassmann codes are codes obtained from the projective system  $\AA^\d(\Fq)  \subseteq \mathbb{P}(\Fq)^{{m\choose\ell}-1}$. Briefly, the construction of affine Grassmann codes as in \cite{BGH2010} is as follows: Let  $\mathcal{F}(\l,m)$ be the $\F_q$-linear space generated by all minors of the generic $\ell \times \ell^\prime$ matrix $\X = (X_{ij})$, in $\delta$ variables $X_{i,j}$.  Note that, according to convention, the  $0 \times 0$ minor of $\X$ is the constant function $1$. In fact $\mathcal{F}(\l,m)$ is a $\binom{m}{\l}$-dimensional subspace of $\Fq[\X]$, where $\Fq[\X]$ is a polynomial ring in $\d$ indeterminates $X_{ij}, ( 1 \leq i \leq \ell, 1 \leq j \leq \l^\prime)$.  Fix an enumeration $P_1, P_2,...,P_{q^\delta}$ of $\AA^\d(\Fq)$. Consider the evaluation map

\begin{align*}
\mathrm{\ev}: \mathcal{F}(\ell,m) &\longrightarrow \Fq^{q^\delta} \\
f &\mapsto (f(P_1), f(P_2), \dots, f(P_{q^\delta})).
\end{align*}

The evaluation map is an injective $\mathbb{F}_q$-linear map and the image of $\ev$ is a linear code. This code is called the affine Grassmann code and is denoted by $\clm$.  The length $n$, the dimension $k$, and the minimum distance $d$ of $\clm$  were determined by Beelen-Ghorpade-Hh{\o}ldt in \cite{BGH2010} and are given by equation \eqref{eq: parameter}. In a subsequent article, Beelen-Ghorpade-H{\o}holdt \cite{BGT2012} also studied the dual of affine Grassmann codes. In fact, they described the code $\clm^\perp$ as an evaluation of certain functions on points of $\AA^\d(\Fq)$. They showed that the minimum distance of the code $\clm^\perp$ is given by equation \eqref{eq: mindistdual}. Moreover, in this article, they explicitly determined a subgroup of the automorphism group $\aut(\clm)$ of $\clm$. The complete automorphism group $\aut(\clm)$ was later determined by Ghorpade and Kaipa in \cite{GK2013}. For example, for any $B\in \GL_\l(\Fq)$ and $A \in \GL_{\l'}(\Fq)$, and $u\in \AA^\d(\Fq)$, consider the following affine transformation:

\begin{align*}
  \psi_{u,A,B}:\AA^\d(\Fq)  \longrightarrow & \AA^\d(\Fq) \\
    P \mapsto & BPA^{-1}+u.        
\end{align*}

 It is clear that the transformation $\psi_{u,A,B}$ gives a bijection of $\AA^\d(\Fq):=\{P_1,\ldots,P_{q^\d}\}$ onto itself, and hence there is a unique permutation $\sigma$ of $\{1,\ldots,q^\d \}$ such that $$(\psi_{u,A,B}(P_1),\ldots, \psi_{u,A,B}(P_{q^\d}))=(P_{\sigma(1)},\ldots,P_{\sigma(q^\d)}).$$
We shall denote this permutation $\sigma$ by $\sigma_{u,A,B}$ and for any $c=(c_1,\ldots,c_{q^\d}),$ we will often write $\sigma_{u,A,B}(c)$ for the $n$-tuple $(c_{\sigma(1)},\ldots,c_{\sigma(q^\d)}).$ The following result shows that these permutations preserve the structure of the code $\clm$.
 
\begin{lemma} \cite{BGT2012}*{Lemma $7$}\label{automorphism}
\label{lemma: Autgrp}
    Let $B\in \GL_\l(\Fq)$, $A \in \GL_{\l^\prime}(\Fq)$, and $u\in \AA^\d(\Fq)$. Then $\sigma_{u,A,B} \in  \aut(\clm)$. 
\end{lemma}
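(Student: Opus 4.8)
The plan is to reduce the claim to an invariance statement about the function space $\mathcal{F}(\l,m)$ underlying the code. Recall that $\clm = \ev(\mathcal{F}(\l,m))$, and that, by the definition of $\sigma := \sigma_{u,A,B}$, we have $\psi(P_i) = P_{\sigma(i)}$ where $\psi := \psi_{u,A,B}$. Hence for any $f \in \mathcal{F}(\l,m)$,
\[
\sigma\big(\ev(f)\big) = \big(f(P_{\sigma(1)}), \dots, f(P_{\sigma(q^\d)})\big) = \big(f(\psi(P_1)), \dots, f(\psi(P_{q^\d}))\big) = \ev(f \circ \psi).
\]
Thus $\sigma$ carries $\clm$ into itself exactly when $\mathcal{F}(\l,m)$ is stable under the substitution $f \mapsto f \circ \psi$, i.e.\ under the change of variables $\X \mapsto B\X A^{-1} + u$. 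Since $\sigma$ is a permutation, hence a linear automorphism of $\Fq^{q^\d}$ preserving dimension, the inclusion $\sigma(\clm) \subseteq \clm$ upgrades automatically to the equality $\sigma(\clm) = \clm$, which is precisely $\sigma \in \aut(\clm)$.

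So the entire statement rests on the following algebraic fact: the space $\mathcal{F}(\l,m)$, spanned by all minors of the generic matrix $\X$, is invariant under $\X \mapsto B\X A^{-1} + u$. Because $\mathcal{F}(\l,m)$ is spanned by the minors $M_{I,J}(\X)$ (over all $I \subseteq \{1,\dots,\l\}$ and $J \subseteq \{1,\dots,\l'\}$ with $|I| = |J|$), and because the affine map factors as left multiplication by $B$, followed by right multiplication by $A^{-1}$, followed by translation by $u$, it suffices to check that each of these three elementary operations, applied to a single minor, again lands in the span of the minors of $\X$.

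For the two multiplicative pieces I would invoke the Cauchy--Binet formula: writing $(B\X)_{I,J} = B_{I,\bullet}\,\X_{\bullet,J}$, Cauchy--Binet gives $M_{I,J}(B\X) = \sum_{K} \det(B_{I,K})\,M_{K,J}(\X)$, a scalar combination of minors of $\X$, and right multiplication by $A^{-1}$ is handled symmetrically on the column side. The translation is the only step that requires genuine care, and I expect it to be the main obstacle. Here I would expand $M_{I,J}(\X+u) = \det(\X_{I,J} + u_{I,J})$ first by column-multilinearity of the determinant, splitting each column into its $\X$-part and its $u$-part, and then apply a generalized Laplace expansion to each resulting term to factor it as $\pm\,(\text{a minor of }\X)\cdot(\text{a minor of }u)$. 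Since the minors of the constant matrix $u$ are scalars, this exhibits $M_{I,J}(\X+u)$ as an $\Fq$-linear combination of minors of $\X$, completing the invariance check. One could alternatively deduce the invariance from the linear action of $\GL_m(\Fq)$ on the Plücker coordinates of the Grassmannian, but the elementary route above is self-contained and, together with the first paragraph, finishes the proof.
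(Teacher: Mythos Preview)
Your argument is correct. The reduction to invariance of $\mathcal{F}(\l,m)$ under the substitution $\X \mapsto B\X A^{-1}+u$ is exactly the right move, and your three-step verification (Cauchy--Binet for the two multiplicative factors, column-multilinearity followed by generalized Laplace expansion for the translation) is valid; in particular, the Laplace expansion along the constant columns in a mixed term does express it as an $\Fq$-linear combination of genuine minors $M_{I',J'}(\X)$, so the translation step goes through.

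As for the comparison: the present paper does not give its own proof of this lemma at all---it simply quotes the result from \cite{BGT2012}*{Lemma~7} and moves on. So there is nothing in the paper to compare against, and your write-up is strictly more than what the paper itself supplies. The argument you give is essentially the standard one (and is in the spirit of the original reference): the only alternative route, which you already mention, is to note that the affine substitutions on $\X$ extend to the linear $\GL_m$-action on Pl\"ucker coordinates of $G_{\l,m}$ fixing the distinguished coordinate, which makes the invariance of the span of minors immediate. Your elementary Cauchy--Binet/Laplace approach has the advantage of being fully self-contained and not invoking the projective picture.
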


 Note that the group of affine transformations
 $$
 \{\sigma_{u,A,B} \mid u \in \AA^\delta(\Fq), A \in \GL_{\l^\prime}(\Fq), B \in \GL_{\l}(\Fq)\}
 $$
 acts transitively on the coordinates of $\clm$. This subgroup plays a crucial role in the proof of the main result in the next section.

\subsection{Majority logic decoding}
Majority logic decoding is one of the most notable decoding algorithms for certain classes of linear codes. It is typically applicable to codes that admit a large matrix consisting of orthogonal parity checks. For example, Reed used iterative majority logic decoding for binary Reed-Muller codes \cite{MS1977}*{Ch 13.7}. More recently, a majority logic decoder was proposed for Grassmann codes and certain classes of Schubert codes \cites{BPP2021, S2022}.  Thus, before proceeding further, and for the sake of completeness, we recall the definition of orthogonal parity checks and how they can be used for error correction. For more details on orthogonal parity checks for binary codes, we refer to \cite{MS1977}*{Ch 13.7}; for $q $-ary codes, we refer to \cite{M1963}*{Ch 1}.

\begin{definition}
     Let $C$ be an $[n, k]$ code. A set $\mathcal J$ of $J$ parity checks of $C$ is said to be orthogonal on the $i^{\it th}$ coordinate if the $J\times n$ matrix $H$  formed by these parity checks has: 
  	\begin{enumerate}
  		\item Each entry in the $i^{\it th}$ column of $H$ is $1$,
  		\item Any other column with at most one non-zero entry.
  	\end{enumerate}
\end{definition}

 The following theorem guarantees that if one has a large number of orthogonal parity checks for each coordinate, then many errors in the code can be corrected.

 \begin{theorem} \cite{M1963}*{Ch. 1, Thm. 1}
 \label{Majority_Thm}
 If for each $1 \le i \le n$, there exists a set $\mathcal J$ of $J$ orthogonal parity checks on the $i^{\it th}$ coordinate. Then, the corresponding majority logic decoder corrects up to $\lfloor J/2 \rfloor$ errors. 
  \end{theorem}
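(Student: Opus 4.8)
The plan is to prove the theorem by analyzing a single coordinate at a time and then invoking the hypothesis for every $i$. Fix a transmitted codeword $c \in C$ and suppose the received word is $y = c + e$ with $\mathrm{wt}(e) \le \lfloor J/2 \rfloor$; I must show that the decoder recovers $e_i$, and hence $c_i = y_i - e_i$, correctly for each $i$. First I would record the basic syndrome identity: since every parity check $h$ lies in $C^\perp$, one has $\langle h, y\rangle = \langle h, c\rangle + \langle h, e\rangle = \langle h, e\rangle$, so the $J$ check values depend only on the error vector. Writing the $k$-th orthogonal check on coordinate $i$ as the row $h^{(k)}$ of $H$ and using condition (1) of the definition (so $h^{(k)}_i = 1$), the $k$-th check value is
\[
s_k \;=\; \langle h^{(k)}, e\rangle \;=\; e_i + \sum_{j \ne i} h^{(k)}_j\, e_j .
\]
Thus each $s_k$ equals the sought value $e_i$ plus an ``interference term'' coming only from error positions $j \ne i$.

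The central combinatorial step is to bound how many of the $J$ checks can have a nonzero interference term. Call a check \emph{corrupted} if $\sum_{j\ne i} h^{(k)}_j e_j \ne 0$, so that $s_k \ne e_i$. A corrupted check must contain some position $j \ne i$ with both $h^{(k)}_j \ne 0$ and $e_j \ne 0$; assign one such $j$ to it. By condition (2) of the definition, column $j$ of $H$ has at most one nonzero entry, so each error position $j \ne i$ can be assigned to at most one check. Hence this assignment is injective, and the number of corrupted checks is at most $w' := \#\{\,j \ne i : e_j \ne 0\,\}$. Letting $M_v$ denote the number of indices $k$ with $s_k = v$, this yields $M_{e_i} \ge J - w'$, while $M_\beta \le w'$ for every $\beta \ne e_i$.

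The decision rule I would use is the standard one: set $\hat e_i$ to be the unique nonzero value attained by strictly more than $J/2$ of the $s_k$, and $\hat e_i = 0$ if no nonzero value occurs that often. To finish, I would split into two cases according to whether $e_i = 0$. If $e_i \ne 0$ then $w' = \mathrm{wt}(e) - 1 \le \lfloor J/2\rfloor - 1$, so $M_{e_i} \ge J - \lfloor J/2\rfloor + 1 = \lceil J/2\rceil + 1 > J/2$, forcing $\hat e_i = e_i$; moreover no other value can exceed $J/2$, since the $M_v$ sum to $J$. If $e_i = 0$ then $w' = \mathrm{wt}(e) \le \lfloor J/2\rfloor \le J/2$, so every nonzero $\beta$ satisfies $M_\beta \le J/2$, no nonzero value attains a strict majority, and the rule returns $\hat e_i = 0$. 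In both cases $\hat e_i = e_i$, and applying this for all $i$ reconstructs $e$, hence $c$.

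The step I expect to be the main obstacle — or at least the one requiring the most care — is the boundary case $e_i = 0$ with $J$ even and $\mathrm{wt}(e) = J/2$: here the correct value $0$ and a wrong value can each be supported by exactly $J/2$ checks, so a naive plurality vote could tie. This is precisely why the decision rule must break ties in favour of $0$ (equivalently, demand a strict majority for a nonzero estimate), and why the hypothesis $h^{(k)}_i = 1$ — guaranteeing that all uncorrupted checks report the \emph{same} value $e_i$ rather than distinct scalar multiples — is essential rather than cosmetic.
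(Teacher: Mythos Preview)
The paper does not give its own proof of this theorem; it merely cites \cite{M1963}*{Ch.~1, Thm.~1} and refers the reader to \cite{BPP2021}*{Thm.~4.2} for a detailed argument. Your proof is correct and is precisely the standard majority-logic argument one finds in those references: the syndrome identity, the injection from corrupted checks to error positions $j\neq i$ via orthogonality condition~(2), and the two-case analysis on whether $e_i=0$, together with the tie-breaking convention in favour of $0$, are exactly what is needed.
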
 
A detailed proof of this theorem can be found in \cite{BPP2021}*{Thm. 4.2}, while the theorem itself originally appeared in \cite{M1963}*{Ch. 1, Thm. 1}. It is well known that the support of a minimum weight codeword of the binary Reed–Muller code $R(r, m)$ forms an $(m-r)$-dimensional flat of the  $m$-dimensional affine space. In $1954$, Reed \cite{Reed1954} exploited this property of Reed-Muller code and proposed an iterative decoder for binary Reed-Muller code, and it was shown that the proposed decoder could correct optimal errors for Reed-Muller code.

Similarly, it was shown \cite{BP2016} that the support of the minimum weight parity checks of Grassmann code $C(\ell, m)$ also has nice geometry. In fact, the support of the minimum weight parity checks of the Grassmann codes $C(\ell, m)$ lies on lines in the Grassmannian $G_{\l, m}$ and choosing any three points on such a line gives a minimum weight parity check for the Grassmann code. Recently, Beelen and Singh \cite{BPP2021} used these properties of Grassmann codes to construct unique geodesics between two points of Grassmannian, and these geodesics were used to obtain a huge number of orthogonal parity checks for each coordinate. As a consequence, majority voting decoding could be performed for Grassmann codes. It was shown that the proposed decoder could correct up to $\lfloor (d-1)/2^{\ell+1}\rfloor$ error for Grassmann code $C(\l, m)$ where $d$ is the minimum distance of the code. Additionally, Schubert codes are codes obtained by puncturing Grassmann code on points outside Schubert varieties. In a subsequent work, the second-named author showed \cite{S2022} that  a majority voting decoding can be proposed for certain Schubert codes, and the error-correcting capability of this decoder is of the same order as for Grassmann codes.

  The affine Grassmann codes are also obtained by puncturing the Grassmann code. In the nonbinary fields case, the support of the minimum weight parity checks of $\clm$ also lies on certain lines in the affine space $\AA^\d(\Fq)$. Therefore, exploring the decoding problem for affine Grassmann code for nonbinary cases using majority voting is natural.

\section{Majority logic decoding of affine Grassmann codes}


This is the main section of this article. In this section, we will construct orthogonal parity checks for nonbinary affine Grassmann codes. But before that we fix some notations for this section. We assume that $q \geq 3$ and as earlier $\mathbf{X}$ is the $\l \t \l'$ matrix in $\d$ variables $X_{ij}$; 
For subsets $I, J \subseteq [\l]$, with $|I|=|J|$, we denote by  $X_{I,J}$,  the minor of $\X$ obtained by taking the submatrix indexed by rows of $I$ and columns of $J$ and as standard convention, the $0 \times 0$ minor of $\X$ is the constant function $1$. Fix an ordering $P_1, P_2,...,P_{q^\delta}$ of all points of  $\AA^\d(\Fq)$. Recall that there is a one to one correspond between a codewords  $c = (c_{P_1}, c_{P_2}, \ldots, c_{P_{q^\d}}) \in \clm$ and functions $f\in \mathcal{F}(\ell,m)$ therefore we may represent a codeword as $c_f$. For a codeword $c_f\in\clm$ the support is defined as: 

$$
\Supp(c_f) := \{P \in \AA^\delta(\F_q): c_{P}\neq 0 \}.
$$

For any subset  $D:=\{ P_1, P_2,\ldots, P_k\}\subseteq\AA^\d(\Fq) $ the restriction of the affine Grassmann code $\clm$ onto the set $D$ the linear code obtained by taking the image  of restriction of the evaluation map $\ev_D$, where $\ev_D$ is defined as:

\begin{align*}
\ev_D : \mathcal{F}(\ell,m) &\longrightarrow \Fq^{k} \\f &\mapsto (f(P_1), f(P_2),\ldots, f(P_k)).
\end{align*}  

In otherwords, the restriction of the code $\clm$ on the set $D$ is code obtained by puncturing the affine Grassmann code $\clm$ on the set $\AA^\d(\Fq)\setminus D$. We denote this code by $C_D^{\mathbb{A}}(\ell, m)$

To continue any further, we need to fix some notations. Let  $V_r$ and  $W_r$ be  $r$-dimensional linear subspaces of $\Fq^\l$ and of $\Fq^{\l'}$ respectively. Fix a  basis $B=\{x_1, x_2, \ldots, x_r\}$ of $V_r$ and  a basis $B'=\{y_1, y_2, \ldots, y_r\}$ of $W_r$. Note that we treat vectors $x_i$ and $y_j$ as row vectors. For any $r$-tuple $\mathcal{A}_r= (A_1,A_2,\ldots,A_r)$, where each $A_i$ is a subset of $\Fq\setminus \{0\}$  with $|A_i|=2$ for $i=1,2,\dots,r$ we introduce the notation:
$$
M(B,B';\mathcal{A}_r) := \left\{ \sum\limits_{i=1}^r a_i x_i^Ty_i: a_i \in A_i \right\}. 
$$

Clearly, $M(B,B';\mathcal{A}_r)$ is consisting of  $\l \t \l'$ matrices with entries from $\Fq$. If the bases of $V_r$ and $W_r$ are standard basis vectors, then we simply write 
$$M(\mathcal{A}_r) := \left\{ \sum\limits_{i=1}^r b_i E_{i,i}: b_i \in A_i \right\}, $$
where $E_{i,i}$ denotes the $\l \t \l'$ matrix with $1$ in the $(i,i)$-th position and $0$ elsewhere. Since each $A_i$ consists of nonzero elements, all matrices in the set $M(\mathcal{A}_r)$ are of rank $r$ whose $i^{\it th}$ diagonal entry is coming from the set $A_i$.

The following lemma is useful in constructing orthogonal parity checks for the code $\clm$ using the sets $M(\mathcal{A}_r)$.
\begin{lemma}
\label{lem:projection_dimension}
Let $1 \leq r \leq \l$. For $1 \leq i \leq r$, let $A_i = \{a_{i1}, a_{i2}\}$ be a set of two distinct, nonzero $\F_q$ elements and let $\mathcal{A}_r= (A_1,A_2,\ldots,A_r)$. The restriction of the affine Grassmann code $\clm$ onto the set $M(\mathcal{A}_r) \cup \{0\}$ is a code of length $2^r+1$ and dimension $2^r$.
\end{lemma}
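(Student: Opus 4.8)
The plan is to dispatch the length assertion directly and then reduce the dimension count to a linear independence statement about a structured $2^r \times 2^r$ matrix. For the length, I would note that the $2^r$ matrices $\sum_{i=1}^r b_i E_{i,i}$ with $b_i \in A_i$ are pairwise distinct, since they are determined by their diagonal entries and each $A_i$ has two elements; as each such matrix has rank $r \geq 1$, none of them equals the zero matrix. Hence $D := M(\mathcal{A}_r) \cup \{0\}$ has exactly $2^r + 1$ points, and the restricted code $\clmD$, being the image of $\ev_D$, has length $2^r + 1$. It remains to show that this image has dimension exactly $2^r$.

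Since $\mathcal{F}(\l,m)$ is spanned by the minors $X_{I,J}$ (with $I,J \subseteq [\l]$, $|I|=|J|$), the image of $\ev_D$ is spanned by the vectors $\ev_D(X_{I,J})$, so the first step is to evaluate each minor at the points of $D$. Writing a typical point as $P_{\mathbf{b}} = \sum_{i=1}^r b_i E_{i,i}$ with $\mathbf{b}=(b_1,\ldots,b_r)$, the submatrix of $P_{\mathbf{b}}$ on rows $I$ and columns $J$ has a nonzero entry only in a position $(i,i)$ with $i \leq r$. Expanding the determinant by the Leibniz formula, a term survives only if there is a bijection $I \to J$ matching each row index to an equal column index lying in $[r]$; this forces $I = J \subseteq [r]$, in which case $X_{I,J}(P_{\mathbf{b}}) = \prod_{i \in I} b_i$. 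I would record this as the central computation: $X_{I,J}(P_{\mathbf{b}}) = \prod_{i \in I} b_i$ when $I = J \subseteq [r]$ and $0$ otherwise, while at the zero matrix every minor except the empty (constant $1$) minor vanishes. Consequently the image of $\ev_D$ is already spanned by the $2^r$ vectors $\ev_D(X_{S,S})$ for $S \subseteq [r]$, so its dimension is at most $2^r$.

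For the reverse inequality I would prove that these $2^r$ vectors are linearly independent, in fact already after discarding the zero coordinate. Consider the $2^r \times 2^r$ matrix $N$ with rows indexed by subsets $S \subseteq [r]$, columns indexed by tuples $\mathbf{b}$ with $b_i \in A_i$, and entries $N_{S,\mathbf{b}} = \prod_{i \in S} b_i$. This matrix is the Kronecker product over $i=1,\ldots,r$ of the $2\times 2$ matrices $\left(\begin{smallmatrix} 1 & 1 \\ a_{i1} & a_{i2} \end{smallmatrix}\right)$, where the two rows record the factor contributed by $i \notin S$ and by $i \in S$ respectively. Each factor has determinant $a_{i2}-a_{i1} \neq 0$ because $A_i$ consists of two distinct elements, so $N$ is invertible (its determinant equals $\prod_i (a_{i2}-a_{i1})^{2^{r-1}}$ up to sign). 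Hence the restrictions of the $X_{S,S}$ to $M(\mathcal{A}_r)$ alone are linearly independent, and a fortiori so are their restrictions to $D$. Combined with the upper bound, this gives $\dim \clmD = 2^r$.

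I expect the main obstacle to be the central minor computation, namely justifying cleanly that $X_{I,J}$ vanishes on $D$ unless $I=J\subseteq[r]$. The argument is combinatorial (a system-of-distinct-representatives, or support, argument on the Leibniz expansion), and some care is needed to rule out off-diagonal matchings and to handle the case $I=J \not\subseteq [r]$, where a zero diagonal entry kills the product. Once this vanishing is established, the Kronecker-product structure of $N$ makes the linear independence essentially automatic.
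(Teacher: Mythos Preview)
Your proof is correct and follows the same overall structure as the paper's: both observe that on $D$ the minors $X_{I,J}$ collapse to the monomials $\prod_{i\in I} X_{ii}$ for $I=J\subseteq [r]$ and vanish otherwise, so the restricted code is the image of the $2^r$-dimensional space spanned by these products. The only substantive difference is in the injectivity step. The paper argues that a polynomial of degree at most $1$ in each $X_{ii}$ which vanishes on the grid $A_1\times\cdots\times A_r$ must lie in the ideal $\langle (X_{ii}-a_{i1})(X_{ii}-a_{i2}):1\le i\le r\rangle$ and hence, by comparing degrees, be identically zero. You instead write the $2^r\times 2^r$ evaluation matrix explicitly and factor it as a Kronecker product of the invertible $2\times 2$ blocks $\left(\begin{smallmatrix}1&1\\a_{i1}&a_{i2}\end{smallmatrix}\right)$. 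Your route is more self-contained, since it avoids the ideal-membership claim (a special case of the combinatorial Nullstellensatz that the paper uses without proof); the paper's route, on the other hand, generalizes more transparently if the two-element sets $A_i$ are ever replaced by larger subsets of $\Fq^*$.
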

\begin{proof}

Clearly, the set $D:= M(\mathcal{A}_r) \cup \{0\}$ consists of $0$ matrix together with diagonal matrices of rank $r$ whose $i^{\it th}$ diagonal entry is $a_{i1}$ or $a_{i2}$ for $1\leq i\leq r$ and rest of the entries are zero. Hence $|D|= 2^r+1$. Now if we consider the matrix 
$$
\X'=\sum\limits_{i=1}^r X_{ii} E_{i,i},
$$
then for any $I,\; J\subseteq [\ell]$ with $|I|=|J|$, we have 
$$
\ev_D (X_{I,J}) =\ev_D (\X'_{I,J}). 
$$
Note that 
$$
\X'_{I,J} =\begin{cases}
    \prod\limits_{i \in I} X_{ii},\; \text{ if }I,\; J\subset [r]\text{ and }I=J\\
    0, \text{ otherwise }.
\end{cases}
$$
The restriction of the code $\clm$ onto the set $D$ is the same code as the code obtained by taking the evaluation map on the vector space $\mathcal{F}_r(\ell,m)$ spanned by all minors of $\X^\prime$ size almost $r$. Note that the dimension of the space $\mathcal{F}_r(\ell,m)$ is $2^r$ and the evaluation map $\ev_D$ is injective on this set. The injectivity follows from the following argument:

Suppose $f\in \mathcal{F}_r(\ell,m)$ is a polynomial that vanishes on every point $P\in D$, then $f$ belongs to ideal $\langle f_1(X_{11}), f_2(X_{22}), \ldots, f_r(X_{rr}) \rangle$, where $f_i(X_{ii}) = (X_{ii}-a_{i1})(X_{ii}-a_{i2})$. Since the functions $\mathcal{F}_r(\ell,m)$  have at most degree $1$ on each variable, it follows from the degree comparison that $f=0$. Consequently, $\ev_D$ gives an injective map on the set $\mathcal{F}_r(\ell,m)$.

This proves that the restriction of the affine Grassmann code $\clm $ onto the set $D$ is linear code of length $2^r+1$ and of dimension $2^r$. This completes the proof of the lemma.
\end{proof}
Using the lemma above we can produce parity checks of the code $\clm$ which support contains the zero matrix and lies entirely in the set of course chosen matrix. To be precise, we have the following result.
\begin{proposition}
    \label{prop: support}
    For any  $1 \leq r \leq \l$ and $\mathcal{A}_r=(A_1,\dots, A_r)$  as earlier, there exist a  parity check $w\in\clm^\perp$ satisfying
    $$
    0\in \Supp (w)\subseteq M(\mathcal{A}_r) \cup \{0\}.
    $$
    \end{proposition}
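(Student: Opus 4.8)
The plan is to obtain $w$ as a nonzero element of the dual of the restricted code $C_D^{\mathbb{A}}(\ell,m)$, where $D := M(\mathcal{A}_r)\cup\{0\}$, and then to show that such an element is forced to be nonzero at the coordinate indexed by the zero matrix. First I would invoke Lemma~\ref{lem:projection_dimension}: the code $C_D^{\mathbb{A}}(\ell,m)$ has length $2^r+1$ and dimension $2^r$, so its dual $\left(C_D^{\mathbb{A}}(\ell,m)\right)^\perp$ is one-dimensional. Pick any nonzero $w'$ in this dual, regarded as a vector in $\Fq^{2^r+1}$ with coordinates indexed by the points of $D$, and extend it by zeros on $\AA^\d(\Fq)\setminus D$ to a vector $w\in\Fq^{q^\d}$. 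Then $\Supp(w)\subseteq D = M(\mathcal{A}_r)\cup\{0\}$ by construction, and for every $f\in\mathcal{F}(\ell,m)$ one has $\sum_{P}w_P\,f(P)=\sum_{P\in D}w'_P\,f(P)=0$ because $w'$ is orthogonal to $\ev_D(f)$; hence $w\in\clm^\perp$.

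It remains to establish that $0\in\Supp(w)$, i.e.\ that the coordinate $w_0$ at the zero matrix does not vanish. I would argue by contradiction. If $w_0=0$, then $\Supp(w)\subseteq M(\mathcal{A}_r)$, so the restriction $w|_{M(\mathcal{A}_r)}$ is a parity check of the code $C_{M(\mathcal{A}_r)}^{\mathbb{A}}(\ell,m)$ obtained by restricting $\clm$ to the $2^r$ points of $M(\mathcal{A}_r)$ only. The crux is then to show that this code already has dimension $2^r$, i.e.\ equals $\Fq^{2^r}$ and therefore admits no nonzero parity check. This rests on the same mechanism as in Lemma~\ref{lem:projection_dimension}: the evaluation on $M(\mathcal{A}_r)$ factors through the multilinear polynomials in $X_{11},\dots,X_{rr}$, and the points of $M(\mathcal{A}_r)$ correspond precisely to the product grid $A_1\times\cdots\times A_r$ of cardinality $2^r$. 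A multilinear polynomial vanishing on this grid lies in the ideal $\langle(X_{11}-a_{11})(X_{11}-a_{12}),\dots,(X_{rr}-a_{r1})(X_{rr}-a_{r2})\rangle$, and a comparison of degrees forces it to be zero. Hence $\ev_{M(\mathcal{A}_r)}$ is injective on the $2^r$-dimensional space $\mathcal{F}_r(\ell,m)$, so $C_{M(\mathcal{A}_r)}^{\mathbb{A}}(\ell,m)=\Fq^{2^r}$ and its dual is trivial.

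Combining these, $w|_{M(\mathcal{A}_r)}=0$, which together with $w_0=0$ yields $w=0$, contradicting the choice of $w'$. Thus $w_0\neq0$ and $0\in\Supp(w)$, giving the required parity check. Equivalently, since the grid $A_1\times\cdots\times A_r$ is an interpolation set for multilinear functions while the origin lies outside it (as $0\notin A_i$), the value $f(0)$ is a fixed $\Fq$-linear combination of the values of $f$ on the grid; the coefficients of this interpolation relation yield $w$ with $w_0\neq 0$ directly.

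The step I expect to be the main obstacle is the dimension count for $M(\mathcal{A}_r)$ without the zero matrix. The point is that including $\{0\}$ in $D$ increases the length by one while keeping the dimension equal to $2^r$; to pin the single dual coordinate onto the zero matrix, one must verify that deleting it genuinely annihilates the dual, that is, that $M(\mathcal{A}_r)$ alone already supports a full-dimensional restricted code. Once the grid-vanishing and multilinearity argument is set up, the contradiction is immediate.
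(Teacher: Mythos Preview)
Your proposal is correct and follows essentially the same route as the paper: take a nonzero generator of the one-dimensional dual of $C_D^{\AA}(\ell,m)$, extend by zeros to a parity check of $\clm$, and argue that the zero matrix must lie in its support because the further restriction to $M(\mathcal{A}_r)$ alone is already a $[2^r,2^r]$ code with trivial dual. Your write-up is in fact slightly more explicit than the paper's in justifying that last dimension count, and the interpolation remark at the end is a nice reformulation of the same fact.
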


\begin{proof} 
As earlier, set $D=M(\mathcal{A}_r)\cup\{0\}$. We have seen in Lemma \ref{lem:projection_dimension} that the restriction $\clmD$ of the affine Grassmann code $\clm$ onto the set $D$ is an $[2^r+1, 2^r]$-linear code. Therefore, its dual $\clmD^\perp$ is generated by a single parity check, say $w^\prime$. Note that, $0$ matrix lies in the support of this parity check as puncturing $\clmD$ at the $0$ matrix gives a $[2^r, 2^r]$ code and puncturing $w^\prime$ at $0$ matrix lies the dual of this code, which is the trivial code. This guarantees that $0\in\Supp(w^\prime)$. Also, by construction $\Supp(w^\prime)\subseteq M(\mathcal{A}_r) \cup \{0\}$. Now we extend the parity check $w^\prime$ to a parity check $w$ by simply giving $0$ weight at coordinates outside the set $D$, i.e. the $P^{\it th}$ coordinate of $w$ is defined as
$$
w_P=\begin{cases}
    w^\prime_P, \text{ if }P\in D,\\
    0,\; \text{ otherwise }.
    \end{cases}
$$
Clearly, $w\in\clm^\perp$ and $\Supp(w)=\Supp(w^\prime)$. Therefore,

 $$
    0\in \Supp (w)\subseteq M(\mathcal{A}_r) \cup \{0\}.
    $$
This completes the proof of the Proposition.
\end{proof}

Next, we find more parity checks for affine Grassmann codes $\clm$ whose supports contains the zero matrix and certain chosen matrices of rank $r$ such that all these parity check are orthogonal at the $0$ matrix. To prove this we need Lemma \ref{lemma: Autgrp}. As earlier, let $V_r$ and $W_r$ be $r$-dimensional linear subspaces of $\Fq^\ell$ and $\Fq^{\ell^\prime}$ respectively. Let $B=\{ x_1, x_2, \ldots, x_r\}$  and $B'=\{y_1, y_2, \ldots, y_r\}$ be an ordered bases of $V_r$ and  $W_r$ respectively.

\begin{corollary}
\label{cor: ParityCheks}
For any  $1 \leq r \leq \l$ let $\mathcal{A}_r=(A_1,\dots, A_r)$, $V_r$, and $W_r$  be as earlier. Then given any ordered bases  $B$ and $B'$ of $V_r$ and $W_r$ respectively, there exists a parity check $w\in\clm^\perp$ satisfying
$$
0\in\Supp(w)\subseteq M(B,B';\mathcal{A}_r)\cup\{0\}.
$$
\end{corollary}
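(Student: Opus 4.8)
The plan is to reduce the general statement about arbitrary ordered bases $B$ and $B'$ to the already-established special case of the standard basis vectors, using the automorphism group of $\clm$ provided by Lemma \ref{lemma: Autgrp}. Proposition \ref{prop: support} gives us a parity check $w_0 \in \clm^\perp$ with $0 \in \Supp(w_0) \subseteq M(\mathcal{A}_r) \cup \{0\}$, where $M(\mathcal{A}_r)$ consists of the rank-$r$ diagonal matrices $\sum_{i=1}^r b_i E_{i,i}$ with $b_i \in A_i$. The observation that drives the proof is that the set $M(B,B';\mathcal{A}_r) = \{\sum_{i=1}^r a_i x_i^T y_i : a_i \in A_i\}$ is the image of $M(\mathcal{A}_r)$ under a linear transformation of the form $P \mapsto B_0 P A_0^{-1}$ for suitable $B_0 \in \GL_\l(\Fq)$ and $A_0 \in \GL_{\l'}(\Fq)$.

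First I would make this change-of-basis explicit. Since $B = \{x_1,\dots,x_r\}$ is a basis of the $r$-dimensional subspace $V_r \subseteq \Fq^\ell$, I can extend it to a full ordered basis $\{x_1,\dots,x_\ell\}$ of $\Fq^\ell$; similarly extend $B' = \{y_1,\dots,y_r\}$ to a basis $\{y_1,\dots,y_{\ell'}\}$ of $\Fq^{\ell'}$. Let $B_0^{-1}$ be the $\ell \times \ell$ matrix whose rows are $x_1,\dots,x_\ell$ and let $A_0^{-1}$ be the $\ell' \times \ell'$ matrix whose rows are $y_1,\dots,y_{\ell'}$; both are invertible because the rows form bases. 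The key computation is then to verify that $B_0^{-1} E_{i,i} (A_0^{-1})^T$ (or, after being careful with transposes, the correct conjugate) equals $x_i^T y_i$ for $1 \le i \le r$, so that the linear map carries each diagonal matrix $\sum a_i E_{i,i}$ to $\sum a_i x_i^T y_i$. Concretely, $E_{i,i} = e_i^T e_i$ where $e_i$ are standard basis row vectors, and one checks $e_i B_0^{-T} = x_i^T$ and $e_i A_0^{-T} = y_i$ under the right labelling, giving the desired identity $M(B,B';\mathcal{A}_r) = \{\,B_0 P A_0^{-1} : P \in M(\mathcal{A}_r)\,\}$ for an appropriate choice of $B_0, A_0$; note also that the zero matrix is fixed by this map, so $D = M(\mathcal{A}_r)\cup\{0\}$ maps bijectively onto $M(B,B';\mathcal{A}_r)\cup\{0\}$.

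With this identification in hand, I would invoke Lemma \ref{lemma: Autgrp} with $u = 0$: the permutation $\sigma_{0,A_0,B_0}$ induced by $\psi_{0,A_0,B_0}(P) = B_0 P A_0^{-1}$ lies in $\aut(\clm)$, hence it also acts on $\clm^\perp$. Applying this automorphism to the parity check $w_0$ from Proposition \ref{prop: support} produces $w := \sigma_{0,A_0,B_0}(w_0) \in \clm^\perp$. Because an automorphism simply permutes coordinates, the support of $w$ is the image of $\Supp(w_0)$ under the underlying bijection $\psi_{0,A_0,B_0}$ of $\AA^\d(\Fq)$. Since $\psi_{0,A_0,B_0}$ fixes the zero matrix and maps $M(\mathcal{A}_r)$ onto $M(B,B';\mathcal{A}_r)$, we get at once $0 \in \Supp(w) \subseteq M(B,B';\mathcal{A}_r)\cup\{0\}$, which is exactly the claim.

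The main obstacle I anticipate is purely bookkeeping: getting the transposes and the row-versus-column conventions correct so that the map $P \mapsto B_0 P A_0^{-1}$ really sends $E_{i,i}$ to $x_i^T y_i$ and not to some other combination. The paper treats $x_i, y_j$ as row vectors and writes matrices as $x_i^T y_i$ (an outer product), so I would fix the convention early that $B_0^{-1}$ and $A_0^{-1}$ collect the basis vectors as rows, verify the single identity $B_0 E_{i,i} A_0^{-1} = x_i^T y_i$ by a direct index computation, and then everything else is an immediate consequence of the transitive, support-permuting action of the automorphism group. No genuinely hard estimate or new idea is required beyond this careful choice of $B_0$ and $A_0$.
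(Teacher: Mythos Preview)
Your proposal is correct and follows essentially the same route as the paper: extend $B,B'$ to invertible matrices, invoke Lemma~\ref{lemma: Autgrp} with $u=0$ to transport the parity check from Proposition~\ref{prop: support} via the resulting automorphism, and read off the support. The only caveat is the bookkeeping you already flagged: with the paper's convention $\sigma_{u,A,B}(c)=(c_{\sigma(1)},\dots,c_{\sigma(q^\delta)})$, one has $\Supp(\sigma(w_0))=\psi^{-1}(\Supp(w_0))$ rather than $\psi(\Supp(w_0))$, so you should either apply $\sigma_{0,A_0,B_0}^{-1}$ to $w_0$ (exactly as the paper does with its $\sigma_{0,Q',Q}^{-1}$) or swap the roles of $B_0,A_0$ and their inverses when defining the map.
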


\begin{proof} 
Let $\{ e_1, e_2, \ldots, e_\l\}$ be the standard basis of $\Fq^\ell$ and let $\{ e_1^\prime, e_2^\prime, \ldots, e_{\ell^\prime}^\prime\}$ be the standard basis of $\Fq^{\ell^\prime}$. Consider the given ordered bases $B=\{ x_1, x_2, \ldots, x_r\}$ of the subspace $V_r$ and $B'=\{y_1, y_2, \ldots, y_r\}$ of subspace $W_r$. Choose matrices $Q \in \GL_\l(\Fq)$ and $Q' \in \GL_{\l'}(\Fq)$ such that $Qx_i=e_i$  and  $y_i^TQ'=e_i^\prime$  for each  $1 \leq i \leq r$ where $y_i^T$ denotes the transpose of $y_i$. If $D=M(\mathcal{A}_r)\cup \{0\}$ and $E=M(B, B^\prime;\mathcal{A}_r)\cup \{0\}$
then Lemma \ref{automorphism}, the automorphism $\sigma_{0, Q', Q} \in \aut(\clm)$ maps $C^{\AA}_E(\ell, m)$  onto the code $\clmD$. Now, by Proposition \ref{prop: support}, there exists a parity check $w_1\in\clm^\perp$ of satisfying
$$
0\in\Supp(w_1)\subseteq M(\mathcal{A}_r)\cup\{0\}.
$$
Now $\sigma_{0, Q', Q}^{-1}(w_1) =w$ is  the desired parity check of $\clm$.
\end{proof}

Recall that our aim is to construct many orthogonal parity checks, for example, ones that are orthogonal on the zero matrix coordinate. In Corollary \ref{cor: ParityCheks}, we constructed a parity check whose support includes the zero matrix and some matrices from the set $ M(B, B'; \mathcal{A}_r) $, where $B$ and $B^\prime$ are ordered bases of $r$-dimensional subspaces of $\Fq^\ell $ and $\Fq^{\ell'} $, respectively, and $ \mathcal{A}_r $ is an $r$-tuple of subsets of $\Fq$, each containing two nonzero elements.

Our next goal is to identify conditions under which the sets $ M(B, B'; \mathcal{A}_r) $ are disjoint. In the following lemma, we determine a condition for the disjointness of the sets $ M(B, B'; \mathcal{A}_r) $ when $B$ and $ \mathcal{A}_r $ are fixed, but $ B' $ varies. Let us fix $1\leq r\leq \ell$ be fixed and $ \mathcal{A}_r=(A_1,\dots, A_r)$ be as earlier.

\begin{lemma}
\label{lemm:Disjoint}
Let $V_r$ and $W_r$ be $r$-dimensional subspaces of the vector spaces $\Fq^\ell$ and $\Fq^{\ell^\prime}$ respectively. Let $B=\{x_1, x_2, \ldots, x_r\}$ be a fixed ordered basis of $V_r$ and let $B_1=\{y_1, y_2, \ldots, y_r\}$ and $B_2=\{z_1, z_2, \ldots, z_r\}$ be two ordered basis of $W_r$. If at least one of the pairs $\{y_i, z_i\}$ is linearly independent, then the sets $M(B, B_1; \mathcal{A}_r)$ and $M(B, B_2; \mathcal{A}_r)$ are disjoint.

\end{lemma}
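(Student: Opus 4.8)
The plan is to take an arbitrary matrix lying in the intersection $M(B, B_1; \mathcal{A}_r) \cap M(B, B_2; \mathcal{A}_r)$ and derive a contradiction with the hypothesis that some pair $\{y_i, z_i\}$ is linearly independent. Suppose the common matrix can be written in two ways, namely
\[
\sum_{i=1}^r a_i x_i^T y_i = \sum_{i=1}^r b_i x_i^T z_i,
\]
where $a_i \in A_i$ and $b_i \in A_i$ are nonzero scalars. Bringing everything to one side gives $\sum_{i=1}^r x_i^T (a_i y_i - b_i z_i) = 0$. The key structural observation is that $x_i^T (a_i y_i - b_i z_i)$ is a rank-one (or zero) $\ell \times \ell'$ matrix whose column space is contained in the line spanned by $x_i^T$, and whose row space is spanned by the row vector $a_i y_i - b_i z_i$.

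First I would exploit the linear independence of the fixed basis $B = \{x_1, \dots, x_r\}$ of the column space. Writing each matrix $x_i^T(a_iy_i - b_iz_i)$ as an outer product, the sum $\sum_{i=1}^r x_i^T (a_i y_i - b_i z_i)$ can be read column-by-column, or more cleanly by factoring: the left factor is the $\ell \times r$ matrix with columns $x_i^T$, and the right factor is the $r \times \ell'$ matrix with rows $a_i y_i - b_i z_i$. Since $B$ is a basis, the left factor has rank $r$, i.e. it is injective as a linear map from $\Fq^r$. The hard part, and the main obstacle I anticipate, is turning the vanishing of this product into the vanishing of each row $a_i y_i - b_i z_i$ separately; this requires the cancellation argument that the $x_i^T$ being linearly independent forces each coefficient row to be zero. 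Concretely, if $L$ denotes the injective linear map given by the $x_i$'s and $L(v) = 0$ only for $v = 0$, then reading the matrix identity as $L$ applied to the tuple of rows shows every row $a_i y_i - b_i z_i$ must vanish.

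Once I have established $a_i y_i = b_i z_i$ for every $i = 1, \dots, r$, the conclusion follows quickly. For the index $i$ where $\{y_i, z_i\}$ is linearly independent by hypothesis, the relation $a_i y_i - b_i z_i = 0$ with $a_i, b_i$ both nonzero would express a nontrivial linear dependence between $y_i$ and $z_i$, which is the desired contradiction. I would then conclude that no such common matrix exists, so $M(B, B_1; \mathcal{A}_r)$ and $M(B, B_2; \mathcal{A}_r)$ are disjoint. The cleanest way to present the middle step is probably to set up the factorization into $X^T R = 0$ where $X = [x_1^T \mid \cdots \mid x_r^T]$ has full column rank $r$ and $R$ is the $r \times \ell'$ matrix of coefficient rows, and then invoke that left-multiplication by a full-column-rank matrix (equivalently, that $X^T$ has a left inverse after restricting to its column space) forces $R = 0$.
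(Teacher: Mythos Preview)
Your proof is correct and follows essentially the same approach as the paper: assume a common matrix exists, use the linear independence of the $x_i$ to deduce $a_i y_i = b_i z_i$ for every $i$, and contradict the hypothesis on the distinguished index. The only cosmetic difference is that the paper assumes without loss of generality that the $x_i$ are standard basis vectors and then reads off rows, whereas you achieve the same separation via the full-column-rank factorization $X^T R = 0 \Rightarrow R = 0$.
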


\begin{proof}
Without loss of generality, we may assume that the basis $x_i$ is the $i^{\it th}$ standard basis vector of $\Fq^\l$. Thus the $i ^{\it th}$ row of  matrix $\sum\limits_{i=1}^r a_i x_i^Ty_i $ is $a_iy_i$. Likewise the $i ^{\it th}$ row of  matrix $\sum\limits_{i=1}^r b_i x_i^Tz_i $ is $b_iz_i$. 
Now if a matrix $P\in M(B, B_1; \mathcal{A}_r)\cap M(B, B_2; \mathcal{A}_r)$, then 
$$
P= \sum\limits_{i=1}^r a_i x_i^Ty_i = \sum\limits_{i=1}^r b_i x_i^Tz_i  \text{ for some }a_i, b_i\in A_i.
$$
Comparing the $i ^{\it th}$ row of $P$, we get $a_i y_i = b_i z_i$. This true for all $1\leq i\leq \ell$. Therefore, $\frac{a_i}{b_i} y_i = z_i$ for all $1\leq i\leq \ell$ but this contradicts the fact that  at least one of the pairs $\{y_i, z_i\}$ is linearly independent
\end{proof}

\begin{lemma} \label{lem:Disjoint_sets}
Let $V_r$ and $W_r$ be as earlier and let $B=\{x_1, x_2, \ldots, x_r\}$ be an ordered basis of $V_r$ and let $B'=\{y_1, y_2, \ldots, y_r\}$ be an ordered basis of $W_r$. Let $\mathcal{A}_r=(A_1,\dots, A_r)$ and $\mathcal{B}_r=(B_1,\dots, B_r)$ be two $r$-tuples of subsetes $\Fq^{*}$ with $|A_i|=|B_i|=2$ for all $1\leq i\leq r$. Then $M(B,B';\mathcal{A}_r)\cap M(B,B';\mathcal{B}_r) =\emptyset$ if and only if $A_i \cap B_i = \emptyset$ for some $i$.

\end{lemma}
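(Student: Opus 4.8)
The plan is to exploit the fact that, because $B$ and $B'$ are ordered bases, the rank-one matrices $x_1^Ty_1,\dots,x_r^Ty_r$ are linearly independent over $\Fq$, so that every matrix in $M(B,B';\mathcal{A}_r)$ and in $M(B,B';\mathcal{B}_r)$ has a \emph{unique} expression of the form $\sum_{i=1}^r c_i x_i^Ty_i$. Once this uniqueness is in place, a matrix lying in both sets forces the two coefficient tuples to coincide coordinatewise, and the whole statement collapses to an elementary comparison of the sets $A_i$ and $B_i$.

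To make the uniqueness transparent I would normalize exactly as in the proof of Lemma~\ref{lemm:Disjoint}: since $x_1,\dots,x_r$ form a basis of $V_r$, left multiplication by a suitable invertible matrix (equivalently, passing through the automorphism of Lemma~\ref{automorphism}) lets us assume without loss of generality that $x_i=e_i$, the $i^{\it th}$ standard basis vector of $\Fq^\l$; this is a bijection of the matrix space and hence preserves both equalities and disjointness. With this normalization the matrix $\sum_{i=1}^r c_i x_i^Ty_i$ is simply the matrix whose $i^{\it th}$ row equals $c_i y_i$ for $1\le i\le r$ and whose remaining rows vanish, so reading off the rows recovers the coefficients $c_i$ immediately.

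For the main step, suppose $P\in M(B,B';\mathcal{A}_r)\cap M(B,B';\mathcal{B}_r)$, so that $P=\sum_{i=1}^r a_i x_i^Ty_i=\sum_{i=1}^r b_i x_i^Ty_i$ for some $a_i\in A_i$ and $b_i\in B_i$. Comparing the $i^{\it th}$ rows gives $a_i y_i=b_i y_i$, and since $y_i\neq 0$ (as $B'$ is a basis) we get $a_i=b_i$ for every $i$; in particular $a_i=b_i\in A_i\cap B_i$ for all $i$. Hence the two sets meet if and only if one can choose a common element of $A_i\cap B_i$ for \emph{each} $i$, i.e.\ if and only if $A_i\cap B_i\neq\emptyset$ for all $i$. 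Taking the contrapositive yields precisely the claim: $M(B,B';\mathcal{A}_r)\cap M(B,B';\mathcal{B}_r)=\emptyset$ exactly when $A_i\cap B_i=\emptyset$ for some $i$.

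I do not expect a genuine obstacle here; the only point deserving care is the uniqueness of the representation $\sum_{i} c_i x_i^Ty_i$, which is exactly the linear independence of the outer products $x_i^Ty_i$. I would flag that this is where the hypothesis that $B$ and $B'$ are bases is actually used, and that the normalization $x_i=e_i$ reduces it to the obvious row-by-row argument, so no tensor-product lemma needs to be invoked explicitly.
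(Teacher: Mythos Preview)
Your proposal is correct and follows essentially the same route as the paper: both arguments show that a common matrix forces $a_i=b_i$ for every $i$ by the linear independence of the outer products $x_i^Ty_i$, and then take the contrapositive. The only cosmetic difference is that the paper subtracts the two expressions and asserts $\sum_i(a_i-b_i)x_i^Ty_i=0\Rightarrow a_i=b_i$, whereas you first normalize $x_i=e_i$ and read off rows; your version makes the linear-independence step slightly more explicit, but the content is identical.
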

\begin{proof}
Let us write $A_i = \{a_{i1}, a_{i2}\}$ and $B_i = \{b_{i1}, b_{i2}\}$ for every $1\leq i\leq r$. Assume that 
$$
M\in M(B,B';\mathcal{A}_r)\cap M(B,B';\mathcal{B}_r).
$$
Then there exists $a_i\in A_i$ and $b_i\in B_i$ such that $M= \sum\limits_{i=1}^r a_i x_i^Ty_i = \sum\limits_{i=1}^r b_i x_i^Ty_i $. This implies that $\sum\limits_{i=1}^r (a_i-b_i) x_i^Ty_i=0$. But this is true if and only if $a_i=b_i$ for all $1\leq i\leq r$. Or equivalently, $A_i\cap B_i\neq \emptyset$ for every $1\leq i\leq r.$ This completes the proof of the lemma.

 \end{proof}

In the next lemma, we construct a set of $r$-tuples $\mathcal{A}_r = (A_1, \dots, A_r) $, where each $A_i$ consists of two nonzero elements of the field $\Fq$, such that the condition in Lemma \ref{lemm:Disjoint} is satisfied. Recall that $q\geq 3$.

\begin{lemma}
\label{lem:Gqr}
    There is a set $\mathbb{G}_q^{(r)}$ of exactly $ \left\lfloor \frac{q-1}{2} \right\rfloor^r$ many $r$-tuples $\mathcal{A}_r = (A_1, \dots, A_r) $, where each $A_i$ consists of two nonzero elements of the field $\Fq$ satisfying
$$
\mathcal{A}_r,\; \mathcal{B}_r\in \mathbb{G}_q^{(r)}\implies A_i\cap B_i=\emptyset\text{ for some }1\leq i\leq r.
$$

\end{lemma}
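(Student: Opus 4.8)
The plan is to reduce everything to the case $r=1$ and then pass to a Cartesian power. First I would treat $r=1$ directly: the nonzero elements $\Fq^{*}=\Fq\setminus\{0\}$ form a set of size $q-1$, and since $q\geq 3$ this set is nonempty and can be partitioned into $\lfloor (q-1)/2\rfloor$ pairwise disjoint two-element subsets — a perfect matching of $\Fq^{*}$ when $q$ is odd, and a near-perfect matching discarding a single leftover element when $q$ is even. I would name this family $S=\{S_1,\dots,S_t\}$ with $t=\lfloor (q-1)/2\rfloor$, so that by construction $S_j\cap S_k=\emptyset$ whenever $j\neq k$.

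For general $r$ I would simply set $\mathbb{G}_q^{(r)}:=S^{r}$, that is, the collection of all $r$-tuples $(A_1,\dots,A_r)$ with each $A_i\in S$. Each coordinate $A_i$ is then a two-element subset of $\Fq^{*}$, as required, and the cardinality is $|S|^{r}=\lfloor (q-1)/2\rfloor^{r}$, which is exactly the claimed count.

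It remains to verify the disjointness property. Given two distinct tuples $\mathcal{A}_r=(A_1,\dots,A_r)$ and $\mathcal{B}_r=(B_1,\dots,B_r)$ in $\mathbb{G}_q^{(r)}$, their distinctness forces $A_i\neq B_i$ for at least one index $i$. Since $A_i$ and $B_i$ both belong to the pairwise disjoint family $S$, the inequality $A_i\neq B_i$ immediately yields $A_i\cap B_i=\emptyset$, which is precisely the desired conclusion.

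I do not anticipate a genuine obstacle here; the only content is the combinatorial observation that any pairwise disjoint family of pairs inside $\Fq^{*}$ has at most $\lfloor (q-1)/2\rfloor$ members (and this bound is attained by an explicit matching), together with the fact that the ``disjoint in at least one coordinate'' property is inherited by the Cartesian power because distinct members of a pairwise disjoint family are automatically disjoint. The one point meriting a word of care is the reading of the statement: the implication is to be applied to \emph{distinct} tuples $\mathcal{A}_r\neq\mathcal{B}_r$ — otherwise $A_i\cap B_i=A_i\neq\emptyset$ for every $i$ — which is consistent with how these tuples are later used to separate the supports of the parity checks produced in Corollary \ref{cor: ParityCheks}.
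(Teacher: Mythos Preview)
Your proposal is correct and follows essentially the same approach as the paper: both take a pairwise disjoint family of $\lfloor (q-1)/2\rfloor$ two-element subsets of $\Fq^{*}$ and define $\mathbb{G}_q^{(r)}$ as its $r$-fold Cartesian power, then argue by contrapositive that distinct tuples must have a coordinate with empty intersection. The only cosmetic difference is that the paper specifies the matching explicitly via a generator $\alpha$ of the cyclic group $\Fq^{*}$, pairing $\alpha^{2i-1}$ with $\alpha^{2i}$, whereas you invoke an arbitrary (near-)perfect matching; and you rightly flag that the implication in the statement is meant for distinct tuples.
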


\begin{proof}  
    It is well known that the multiplication group $\mathbb{F}_q^*$ is a cyclic group. Let  $\alpha$ be a generator of this group. Consider the set
    
     $$
     \Delta:= \left\{ \{\alpha^{2i-1},\alpha^{2i}\}: i=1,2,\dots, \left\lfloor \frac{q-1}{2} \right\rfloor \right\}.
     $$
    Clearly $|\Delta|=\lfloor(q-1)/2\rfloor$.  Define
    $$
   \mathbb{G}_q^{(r)}:=\{ \mathcal{A}_r=(A_1,A_2,\ldots,A_r) : A_i\in  \Delta \}.
    $$

If $\mathcal{A}_r, \; \mathcal{B}_r\in \mathbb{G}_q^{(r)}$ then $A_i\cap B_i=\emptyset$ for some $1\leq i\leq r$. If this is not true, then $A_i\cap B_i\neq\emptyset$ for all $i$. But this implies that $A_i= B_i$ as $A_i,\; B_i\in \Delta$ and $\Delta$ is a collection of disjoint subsets of $\Fq^*$. Thus the set $\mathbb{G}_q^{(r)}$ is a set with desired property. Clearly, $| \mathbb{G}_q^{(r)}| = \left\lfloor \frac{q-1}{2} \right\rfloor^r$.

\end{proof}

Now we are ready to prove the main result of this article. The next theorem guarantees the existence of plenty of parity checks orthogonal to $\clm$ on some fixed coordinate to perform the majority voting decoder.

\begin{theorem}
\label{thm:Majority_logic_dec}

Let $\l, \l', m$ be positive integers satisfying $\l \leq \l'$, $m=\l+\l'$ and let $\clm$ be the corresponding affine Grassmann code. There exist a  set $\mathcal{J}$  with  $|\mathcal{J}|=J$ parity checks of $\clm$ that are orthogonal on coordinate corresponding to the $0$ matrix, where
 \begin{equation}
     \label{eq:setJ}
      J:= \sum_{r=1}^{\l}  \left\lfloor \frac{q-1}{2} \right\rfloor^r \frac{\prod\limits_{i=0}^{r-1}(q^{\l}-q^i) \prod\limits_{j=0}^{r-1}(q^{\l'}-q^j)}{ (q-1)^r \prod\limits_{i=0}^{r-1} (q^r-q^i)}.
 \end{equation}

\end{theorem}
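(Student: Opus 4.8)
The plan is to assemble the parity checks supplied by Corollary \ref{cor: ParityCheks}, one for each admissible set $M(B,B';\mathcal{A}_r)$, and to choose the underlying data $(r,V_r,W_r,B,B',\mathcal{A}_r)$ so that the resulting supports meet only at the zero matrix. Every parity check $w$ produced by Corollary \ref{cor: ParityCheks} satisfies $0\in\Supp(w)\subseteq M(B,B';\mathcal{A}_r)\cup\{0\}$, so after rescaling $w$ so that its entry in the coordinate of the zero matrix equals $1$, the whole collection will be orthogonal on that coordinate exactly when the chosen sets $M(B,B';\mathcal{A}_r)$ are pairwise disjoint (disjointness of the $M$-sets forces the supports to agree only at $0$). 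Thus the theorem reduces to counting how many pairwise disjoint sets $M(B,B';\mathcal{A}_r)$ can be produced, and matching that count with $J$.

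First I would record the structural facts that make disjointness tractable. Writing a generic element of $M(B,B';\mathcal{A}_r)$ as $\sum_{i=1}^r a_i x_i^Ty_i=X^TDY$ with $a_i\neq0$ and the $x_i,y_i$ linearly independent, one sees the column space is exactly $V_r=\langle B\rangle$ and the row space is exactly $W_r=\langle B'\rangle$. Hence two sets attached to different $V_r$, or to different $W_r$, are automatically disjoint, and sets attached to different ranks $r$ are disjoint because their matrices have different ranks. This confines every genuine overlap to a single pair $(V_r,W_r)$ of one fixed rank, where Lemmas \ref{lemm:Disjoint} and \ref{lem:Disjoint_sets} apply.

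Next I would organize the choices inside a fixed pair $(V_r,W_r)$. Fix once and for all one ordered basis $B$ of $V_r$. On the ordered bases $B'=(y_1,\dots,y_r)$ of $W_r$ declare $B'\sim B''$ whenever $z_i\in\langle y_i\rangle$ for all $i$; each class has exactly $(q-1)^r$ elements, so there are $|\GL_r(\Fq)|/(q-1)^r$ classes. Selecting one representative per class and letting $\mathcal{A}_r$ range over the set $\mathbb{G}_q^{(r)}$ of Lemma \ref{lem:Gqr} produces $\frac{|\GL_r(\Fq)|}{(q-1)^r}\left\lfloor\frac{q-1}{2}\right\rfloor^r$ sets for this $(V_r,W_r)$. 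Pairwise disjointness splits into two cases: if two representatives lie in different classes then some pair $\{y_i,z_i\}$ is linearly independent, and the row-comparison argument of Lemma \ref{lemm:Disjoint} forces disjointness; crucially that argument uses only that the scalars are nonzero, so it still applies when the two tuples $\mathcal{A}_r$ also differ. If the representative is the same but the tuples differ, Lemma \ref{lem:Gqr} furnishes an index $i$ with $A_i\cap B_i=\emptyset$, and Lemma \ref{lem:Disjoint_sets} gives disjointness.

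Finally I would multiply through. There are $\prod_{i=0}^{r-1}(q^\ell-q^i)/|\GL_r(\Fq)|$ choices of $V_r$ and $\prod_{j=0}^{r-1}(q^{\ell'}-q^j)/|\GL_r(\Fq)|$ choices of $W_r$, and for each the count above, so the rank-$r$ contribution is precisely the $r$-th summand of \eqref{eq:setJ}; summing over $1\le r\le\ell$ yields $J$ pairwise disjoint sets and hence $J$ parity checks orthogonal on the zero coordinate. I expect the main obstacle to be carrying out the bookkeeping of this last step honestly: one must verify that the scaling equivalence on bases of $W_r$ has uniform class size $(q-1)^r$, that fixing a single $B$ per $V_r$ neither overcounts nor undercounts, and above all that disjointness holds simultaneously across all three varying parameters $B'$, $\mathcal{A}_r$, and $(V_r,W_r)$, rather than only in the isolated pairwise situations covered by the individual lemmas.
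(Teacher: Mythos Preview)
Your proposal is correct and follows essentially the same approach as the paper: build one parity check per set $M(B,B';\mathcal{A}_r)$ via Corollary \ref{cor: ParityCheks}, stratify the $M$-sets first by rank and by the pair $(V_r,W_r)$ of column/row spaces, then within a fixed pair vary $B'$ over representatives of the scaling classes and $\mathcal{A}_r$ over $\mathbb{G}_q^{(r)}$, and multiply out the counts. Your write-up is in fact slightly more careful than the paper's on one point: you explicitly treat the ``mixed'' case where both the basis representative and the tuple $\mathcal{A}_r$ change, observing that the row-comparison argument behind Lemma \ref{lemm:Disjoint} needs only that all scalars are nonzero and hence still yields disjointness there; the paper's proof states disjointness only for the two ``pure'' cases and leaves the mixed case implicit.
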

\begin{proof}
For every $r$ satisfying $1\leq r\leq \ell$, we can construct, as in Lemma \ref{lem:Gqr}, sets $ \mathbb{G}_q^{(r)}$ of $r$-tuples of two subsets of $\Fq^*$ with  $|\mathbb{G}_q^{(r)}|= \lfloor (q-1)/2\rfloor^r.$ It has been proved in Corollary \ref{cor: ParityCheks} that for any subset $\mathcal{A}_r\in \mathbb{G}_q^{(r)}$ and ordered bases $B=\{x_1,\dots, x_r\}$ and $B^\prime=\{y_1,\dots, y_r\}$ of susbspaces of $r$-dimensional subspaces $V_r$ of $\Fq^\ell$ and $W_r$ of $\Fq^{\ell^\prime}$ respectively, there exist a parity check $w$ of $\clm$ such that 
$$
0\in \Supp (w)\subseteq M(B, B^\prime;\mathcal{A}_r)\cup\{0\}.
$$
Therefore, counting orthogonal parity check at coordinate $0$ matrix is equivalent to counting disjoint sets $M(B, B^\prime;\mathcal{A}_r)$. suppose that $V_r$ and $V_r^\prime$ are two $r$-dimensional subspaces of $\Fq^\ell$ with ordered bases $\{x_1, x_2, \ldots, x_r\}$ and $\{x_1', x_2', \ldots, x_r'\}$ respectively. Similarly, suppose that $W_r$ and $W_r^\prime$ are two $r$-dimensional subspaces of $\Fq^{\ell^\prime}$ with ordered bases $\{y_1, x_2, \ldots, y_r\}$ and $\{y_1', x_2', \ldots, y_r'\}$ respectively. If $(V_r, W_r)\neq (V_r^\prime, W_r^\prime)$ then the sets 
$$
M_r = \left\{ \sum_{i=1}^r a_i x_i^T y_i : a_i \in \mathbb{F}_q^* \right\} \quad \text{and} \quad M_r' = \left\{ \sum_{i=1}^r b_i x_i'^T y_i' : b_i \in \mathbb{F}_q^* \right\}
$$
of rank $r$ matrices are disjoint. This follows simply because the column span of each of matrix of the set $M_r$ is $V_r$ and the row span is $W_r$. Clearly, the number of ordered pair $(V_r, W_r)$ of $r$-dimensional subspaces of $V_r$ of $\Fq^\ell$ and $W_r$ of $\Fq^{\ell^\prime}$ is 
$$
\prod\limits_{i=0}^{r-1}\frac{(q^\l - q^i)(q^{\l'} - q^i)}{(q^r-q^i)^2}.
$$

Now let $(V_r, W_r)$ be an ordered pair of subspaces as above. Let $B=\{x_1,\dots, x_r\}$ be an ordered basis of $V_r$ and let $B^\prime=\{y_1,\dots, y_r\}$ be an ordered basis of $W_r$. Then it follows from  Lemma \ref{lem:Gqr} that the set 
$$
\{M(B, B^\prime; \mathcal{A}_r): \mathcal{A}_r\in \mathbb{G}_q^{(r)}\}
$$
contains subsets of rank $r$-matrices that are disjoint, i.e. $M(B, B^\prime; \mathcal{A}_r)\cap M(B, B^\prime; \mathcal{B}_r)=\emptyset$ for all $\mathcal{A}_r, \; \mathcal{B}_r\in \mathbb{G}_q^{(r)}$, $\mathcal{A}_r\neq \mathcal{B}_r$. Further, the number of ordered bases $B^\prime$ of $W_r$ such that the $j^{\it th}$ vector of any two of such ordered bases is linearly independent for some $j$ is
$$
\frac{\prod\limits_{i=0}^{r-1}(q^r - q^i)}{(q-1)^r}.
$$
We  have also shown in Lemma \ref{lemm:Disjoint}  that if $B_1,\; B_2$ are such two ordered bases then for any $\mathcal{A}_r\in \mathbb{G}_q^{(r)}$, the set $M(B, B_1;\mathcal{A}_r)$ and $M(B, B_2;\mathcal{A}_r)$ are disjoint. Therefore, for a fixed $1\leq r\leq\ell$, there exist 
$$
\left\lfloor \frac{q-1}{2} \right\rfloor^r \frac{\prod\limits_{i=0}^{r-1}(q^{\l}-q^i) \prod\limits_{j=0}^{r-1}(q^{\l'}-q^j)}{ (q-1)^r \prod\limits_{i=0}^{r-1} (q^r-q^i)}.
$$
set $M(B, B^\prime; \mathcal{A}_r)$ of $2^r$ rank $r$ matrices such that any two of them are disjoint. Moreover, corresponding to each such $M(B, B^\prime; \mathcal{A}_r)$, we get a parity check of $\clm$ that are orthogonal on the zero matrix coordinate. Varying $r$ from $1$ to $\ell$ and adding them we get the set $\mathcal{J}$ of $J$ orthogonal parity checks as desired in the theorem.

\end{proof}

The next corollary is the final result of the article. 
\begin{corollary}
  Using majority voting, one can correct $\lfloor J/2\rfloor$ errors for the affine Grassmann code  $\clm$, where $J$ is given by equation \eqref{eq:setJ}. 
\end{corollary}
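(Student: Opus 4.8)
The plan is to derive the corollary by combining Theorem \ref{thm:Majority_logic_dec} with the transitivity of the automorphism group and then invoking the abstract majority-logic bound of Theorem \ref{Majority_Thm}. The substantive work is already done: Theorem \ref{thm:Majority_logic_dec} supplies a set $\mathcal J$ of exactly $J$ parity checks of $\clm$ that are orthogonal on the coordinate indexed by the zero matrix. What remains is to transport this single orthogonal configuration to every other coordinate and then quote the general decoding theorem.

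First I would fix an arbitrary coordinate, namely the one corresponding to a matrix $P\in\AA^\d(\Fq)$, and apply Lemma \ref{automorphism} with $u=P$, with $A=I'$ the $\l'\times\l'$ identity matrix and $B=I$ the $\l\times\l$ identity matrix, to obtain the code automorphism $\sigma_{P,I',I}\in\aut(\clm)$. Since $\psi_{P,I',I}(0)=I\cdot 0\cdot (I')^{-1}+P=P$, the induced permutation of coordinates carries the zero-matrix coordinate to the coordinate of $P$. Applying $\sigma_{P,I',I}$ to each element of $\mathcal J$ produces a set $\sigma_{P,I',I}(\mathcal J)$ of $J$ vectors; because $\sigma_{P,I',I}$ acts as a pure coordinate permutation and permutations preserve the standard inner product, it maps $\clm^\perp$ into itself, so each element of $\sigma_{P,I',I}(\mathcal J)$ is again a parity check of $\clm$.

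Next I would verify that orthogonality is preserved under this operation. Orthogonality on a coordinate is a purely combinatorial property of the $J\times n$ parity-check matrix $H$: one column consists entirely of $1$'s, while every other column has at most one nonzero entry. A coordinate automorphism acts by simultaneously permuting the columns of $H$, so the condition ``at most one nonzero entry per column'' is untouched, and the all-ones column is simply carried from the zero-matrix coordinate to the coordinate of $P$ without rescaling. Hence $\sigma_{P,I',I}(\mathcal J)$ is a set of $J$ parity checks orthogonal on the coordinate of $P$. Since $P$ was arbitrary, each of the $n=q^\d$ coordinates carries a set of $J$ orthogonal parity checks.

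Finally, with $J$ orthogonal parity checks available on every coordinate, the hypotheses of Theorem \ref{Majority_Thm} are met verbatim, and it guarantees that the associated majority-logic decoder corrects up to $\lfloor J/2\rfloor$ errors, which is exactly the assertion of the corollary. I do not expect a genuine obstacle here: the depth of the argument lies in Theorem \ref{thm:Majority_logic_dec}, and this final step is the routine propagation of one orthogonal configuration to all coordinates through the transitive permutation action. The only point requiring an explicit (but immediate) remark is that column permutations preserve the orthogonality structure, and this follows directly from $\sigma_{P,I',I}$ being a pure relabeling of coordinates.
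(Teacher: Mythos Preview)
Your proposal is correct and follows essentially the same approach as the paper's proof: invoke Theorem \ref{thm:Majority_logic_dec} for the zero-matrix coordinate, use the transitive action from Lemma \ref{automorphism} (you make the translation $\sigma_{P,I',I}$ explicit, which the paper only alludes to) to transport the orthogonal family to every coordinate, and then apply Theorem \ref{Majority_Thm}. Your added remarks that coordinate permutations preserve both $\clm^\perp$ and the orthogonality pattern of $H$ are correct and simply flesh out what the paper leaves implicit.
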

\begin{proof}
  From Theorem \ref{thm:Majority_logic_dec}, we get a set $\mathcal{J}$ of $J$ parity checks, where $J$ is given by equation \eqref{eq:setJ}, which are orthogonal on the coordinate $0$-matrix. Further, we know from Lemma \ref{lemma: Autgrp} that the automorphism group of $\clm$ acts transitively on the coordinates of codewords of $\clm$. Using this, a set of orthogonal parity checks for each coordinate can be produced. Finally, using Theorem \ref{thm:Majority_logic_dec} we can correct $\lfloor J/2\rfloor$ errors for $\clm$.
\end{proof}

\begin{remark}
\label{counting}

If $q\geq 3$, we see from the proof of theorem \ref{thm:Majority_logic_dec}, using the sets $M(B, B^\prime;\mathcal{A}_r)$ we can produced 
$$
2^r \left\lfloor \frac{q-1}{2} \right\rfloor^r \frac{\prod\limits_{i=0}^{r-1}(q^{\l}-q^i) \prod\limits_{j=0}^{r-1}(q^{\l'}-q^j)}{ (q-1)^r \prod\limits_{i=0}^{r-1} (q^r-q^i)}
$$ 
distinct $\l \times \l'$ matrices of rank $r$ over finite field $\Fq$. If $q$ is odd, then this number is the same as the total number of $\l\times \l'$ matrices of rank $r$ over the finite field $\F_q$. Therefore, the proposed set of parity checks is optimal, and this set can not be enlarged to improve the decoding capability of the code. 
\end{remark}

\begin{remark}
   Let $c \in \clm$ be the transmitted code, and let $w$ be the received codeword with Hamming distance $d_H(c,w) \leq J/2$, where $J$ is as in equation \eqref{eq:setJ}. From theorem \ref{thm:Majority_logic_dec}, we can uniquely recover the codeword $c$. If $\l$ and $m$ are fixed, and $d$ is the minimum distance of the code $\clm$, then  $J/d \longrightarrow 1/2^\l$ if $q\longrightarrow \infty$. Hence, asymptotically we can correct up to $d/2^{\l+1}$ many errors using majority voting decoder proposed in Theorem \ref{thm:Majority_logic_dec}
\end{remark}

\subsubsection*{Acknowledgements} During the work of this article, Prasant Singh was supported by SERB SRG grant SRG/2022/001643 from DST, Govt of India. A significant part of this work was carried out during a visit of Fernando Pi\~nero to IIT-Jammu. The authors wish to thank Sudhir Ghorpade, Mrinmoy Datta, the organizers of the ICAGCTC-2023 conference, and the Mathematics Departments of IIT-Jammu and IIT-Bombay for their generous assistance in facilitating that visit.

\newpage

\bibliographystyle{plain}

\bibliography{Version_2.bib}
  
\end{document}